\documentclass[journal,twoside,web]{ieeecolor}

\usepackage{generic}
\usepackage{cite}
\usepackage{amsmath,amssymb,amsfonts}
\usepackage{algorithmic}
\usepackage{graphicx}
\usepackage{textcomp}

\usepackage{graphicx}
\usepackage{subfigure}
\usepackage{epsfig} 
\usepackage{cite}
\usepackage{lcsys}
\usepackage{color}
\usepackage{url}





\newtheorem{theorem}{Theorem}[section]

\newtheorem{proposition}[theorem]{Proposition}

\newtheorem{remark}[theorem]{Remark}
\newtheorem{assumption}[theorem]{Assumption}

\pagestyle{empty} 

\begin{document}

\def\BibTeX{{\rm B\kern-.05em{\sc i\kern-.025em b}\kern-.08em
    T\kern-.1667em\lower.7ex\hbox{E}\kern-.125emX}}
\markboth{\journalname, VOL. XX, NO. XX, XXXX 2017}
{Author \MakeLowercase{\textit{et al.}}: Preparation of Papers for IEEE Control Systems Letters (August 2022)}

\title{Leader-Follower Formation Control of Perturbed Nonholonomic Agents along Parametric Curves with Directed Communication}

\author{
Bin Zhang, Hui Zhi, Jose Guadalupe Romero, and David Navarro-Alarcon%
\thanks{This work is supported by the Research Grants Council (RGC) of Hong Kong under grant 15212721 and grant 15231023. \textit{(Corresponding author: D. Navarro-Alarcon.)}}
\thanks{B. Zhang, H. Zhi, and D. Navarro-Alarcon are with The Hong Kong Polytechnic University (PolyU), Kowloon, Hong Kong. (e-mail: me-bin.zhang@connect.polyu.hk; hui1225.zhi@connect.polyu.hk; dnavar@polyu.edu.hk)}
\thanks{J. G. Romero is with the Instituto Tecnológico Autónomo de México (ITAM), Mexico City, Mexico. (e-mail: jose.romerovelazquez@itam.mx)}
}

\maketitle
\thispagestyle{empty}

\begin{abstract}
In this paper, we propose a novel formation controller for nonholonomic agents to form general parametric curves.
First, we derive a unified parametric representation for both open and closed curves.
Then, a leader-follower formation controller is designed to form the parametric curves.
We consider directed communications and constant input disturbances rejection in the controller design.
Rigorous Lyapunov-based stability analysis proves the asymptotic stability of the proposed controller.
Detailed numerical simulations and experimental studies are conducted to verify the performance of the proposed method.
\end{abstract}

\begin{IEEEkeywords}
Parametric curve; Multi-agent system; Formation control; Directed graph; Disturbance rejection
\end{IEEEkeywords}

\section{Introduction}
\label{sec:introduction}

\IEEEPARstart{T}{he} application of multi-agent systems has gained popularity over the past decades\cite{olfati2007consensus}.
A typical multi-agent system usually has multiple layers to deal with assigned tasks, such as formation control, path planning, environmental sensing, and coordination\cite{ismail2018survey}.
Among those, multi-agent formation control is the fundamental problem studied by many researchers.

Formation control technology can be used in various areas, such as object transportation\cite{alonso2017multi}, rescue\cite{chen2005formation}, and environmental surveillance\cite{lu2016cooperative}.
In those scenarios, agents are usually controlled to form a desired pattern and keep or deform it following the task requirements\cite{lo2022formation}.
Generally, agents can achieve formation control in three ways, that is, position-based, displacement-based, and distance-based methods\cite{oh2015survey}.
The platforms that carry out the formation control can be mobile robots\cite{diaz2023leader, izumi2022formation}, satellites\cite{scharf2004survey}, manipulators\cite{wu2022distributed}, etc.
In the previous literature, the desired patterns that agents need to form are usually specified by some absolute or relative positions, and those patterns are usually simple in terms of morphology, such as polygons and lines.
This could cause problems when dealing with complicated cases where the desired pattern has a complex structure, such as curves, because it is usually difficult to fully depict the characteristics of such complex patterns simply with several positions.
There are also scenarios where we only care about the overall distribution rather than the specific positions of agents, such as caging the object in the object transportation tasks and monitoring the boundaries of spilled oil. 
In that case, representing the desired pattern with several positions could limit the adaptivity and flexibility of the agents.

In this work, we consider adopting a more general way, the parametric curves,  to represent the desired pattern the agents need to form and develop a leader-follower formation control method to drive agents to form those parametric curves.
In this way, our method has high flexibility to deal with formation tasks with complex desired patterns, such as open and closed curves with complex morphology.
Representing the desired patterns with parametric equations rather than positions also endows our method with high scalability, since we specify the target of agents by parameters that do not require measurements.
We also consider the requirements for real-world applications.
The developed method can work under directed communications and reject constant input disturbances.

\emph{Related work.}
Compared with the abundant literature on multi-agent formation control, the works on driving agents to form complex patterns or parametric curves are limited.
Early work can be seen in \cite{zhang2007coordinated}, where the author developed a steering control to drive particles to travel along closed invariant patterns (curves). 
In this work, particles are coupled in a chain structure and limited to a constant unit speed.
Hsieh \emph{et. al.}\cite{hsieh2008decentralized} proposed a decentralized controller for robotic swarms to form closed planar curves.
In this work, agents with double-integrator dynamics were considered, and they did not need to exchange information during the control process.
Song \emph{et. al.}\cite{cheng2021finite} studied the coverage control of a multi-agent system on a closed curve.
They considered the single-integrator agents and optimized the coverage cost of agents in a finite time.
A Fourier series-based method was developed in \cite{zhang2023fourier} for nonholonomic agents to track evolving curves, but a strongly connected communication topology was required.
More examples can also be seen in \cite{saldana2021estimating,wang2019leaderless}, where estimating and tracking environmental boundaries, such as profiles of plumes and liquids, by multi-agent systems are studied.

Reviewing the previous works, we can find that most works considered closed or open curves, but seldom considered both in a single work.
Also, most previous works did not consider the disturbance rejection in their methods, which significantly limits their applications in real-world missions.
Besides, most previous works only considered simple agents' dynamics, such as single-integrator and double-integrator dynamics. 
Extensions to more complicated dynamics, such as nonholonomic dynamics, still need to be considered.
More importantly, experiments on real-world systems should be conducted to verify the performance of the proposed methods, which was not done by many previous works.

\emph{Our contribution.}
In this work, we study the formation control problem of multi-agent systems to form complicated patterns represented by parametric curves.
Compared with the normal formation control method, our proposed method specifies agents' targets by curve parameters rather than specific absolute or relative positions, thus having high scalability.
Compared with previous related work, our proposed method is developed for nonholonomic agents under directed communications and can deal with general curves (i.e., both open and closed curves) and reject constant input disturbances.
The main contribution of our work is summarized as follows
\begin{itemize}
    \item [1)] A unified parametric representation of open and closed curves for the formation control problem.
    \item [2)] A leader-follower formation controller for nonholonomic agents to form parametric curves under directed communications and constant input disturbances.
    \item [3)] A rigorous stability analysis, numerical simulations, and experimental study to investigate the properties and validate the performance of the proposed method.
\end{itemize}

\emph{Notation.}
Matrices and vectors are denoted as bold letters, while scalars are denoted as plain letters.
We use $\mathbf{0}$ to denote a vector of appropriate dimension with all elements as zeros, and $\mathbf{0}_m$ to denote a square matrix of zeros.
$\mathbf{I}_n$ denotes a $n\times n$ identity matrix.
We use the symbol $\text{diag}(\cdot)$ to represent the diagonalization operation, where ``$\cdot$'' could be square matrices or scalars.

\emph{Organization.}
The rest of the paper is organized as follows:
Section II presents the mathematical preliminaries; Section III derives the formation controller; Section IV validates the theory with simulations and experiments;
Section V gives final conclusions.

\section{Mathematical Modeling}
\label{sec:Dynamics}
\subsection{Curve Representation}
Our goal in this paper is to study the multi-agent control method driving agents to achieve the formation along curves.
Therefore, the first step is to define the curve we expect the agents to form. 
We only consider the planar cases in this paper and assume that the desired curve can be parameterized by the following linear regression equation.
\begin{equation}
    \mathbf{c}=\mathbf{G}(s)\boldsymbol{\xi}
    \label{curve}
\end{equation}
where $\mathbf{c}=[c_x,c_y]^{\mathsf{T}}$ denotes the curve, $s\in[0,1]$ is the normalized length parameter, $\mathbf{G}(s)\in\mathbb{R}^{2\times2H}$ is the matrix containing basis functions dependent on the parameter $s$, $H$ is the number of basis functions, and $\boldsymbol{\xi}\in\mathbb{R}^{2H\times1}$ is the vector containing the coefficients.
By selecting different basis functions, we can represent the desired curve as different types, such as Fourier series, Bezier contours, and polynomial contours.

In real-world missions, we can obtain the curve coefficients by estimation.
First, we sample a series of points on the observed environmental curve, denoted as
\begin{equation}
    \mathbf{C}=[\mathbf{c}_1^{\mathsf{T}},\dots,\mathbf{c}_N^{\mathsf{T}}]^{\mathsf{T}}
\end{equation}
where $N$ is the total number of sampled points. Then, appropriately select a series of basis functions and compute the basis matrice corresponding to each sampled point, denoted as
\begin{equation}
    \mathbf{G}_h=[\mathbf{G}_1^{\mathsf{T}},\dots,\mathbf{G}_N^{\mathsf{T}}]^{\mathsf{T}}
\end{equation}
Finally, we can estimate the curve coefficients by
\begin{equation}
    \boldsymbol{\xi}=(\mathbf{G}_h^{\mathsf{T}}\mathbf{G}_h)^{-1}\mathbf{G}_h\mathbf{C}
    \label{coff est}
\end{equation}
Note that the number of sample points along the curve must satisfy $N>2H$ to guarantee the existence of $(\mathbf{G}_h^{\mathsf{T}}\mathbf{G}_h)^{-1}$, which is usually easy to fulfill in real missions.

\subsection{Dynamic Model}
The agents considered in this work are assumed to conduct planar motions and have nonholonomic dynamics with input disturbance.
We define the configuration of agent $i$ as $[\mathbf{x}_i^{\mathsf{T}}, \theta_i]^{\mathsf{T}}$, where $\mathbf{x}_i=[x_i, y_i]^{\mathsf{T}}$ and $\theta_i$ are the center position and the orientation of the agent, respectively.
The dynamic model of agent $i$ is given by
\begin{equation}
    \begin{bmatrix}
    \dot{x}_i \\ \dot{y}_i \\ \dot{\theta}_i
    \end{bmatrix}=
    \begin{bmatrix}
    \cos{\theta_i} & 0 \\
    \sin{\theta_i} & 0 \\
    0 & 1
    \end{bmatrix}
    (\mathbf{u}_i+\mathbf{d}_i)
    \label{dynamics}
\end{equation}
where $\mathbf{u}_i=[v_i, \omega_i]^{\mathsf{T}}$ denotes the control input, with $v_i$ and $\omega_i$ as the linear and angular velocities of the agent, respectively; $\mathbf{d}_i=[d_{i1}, d_{i2}]^{\mathsf{T}}$ denotes the constant input disturbance.

To simplify the controller design and stability analysis, we define a virtual control point adopting the input/output feedback linearization\cite{siliciano2010robotics}.
The virtual control point is given by the following change of coordinates \cite{Becerraelal} (see also \cite{ROMROD}):
\begin{equation}
    \left\{
    \begin{aligned}
    & \bar{x}_{i}:=x_i+\ell\cos{\theta_i}\\
    & \bar{y}_{i}:=y_i+\ell\sin{\theta_i}\\
    & \bar{\theta}_i:=\theta_i
    \end{aligned}
    \right.
    \label{change_of_coord}
\end{equation}
where $\ell\neq0$ is an arbitrary scalar that translates the agent's position to an arbitrary close location.

By using these new coordinates, we can obtain the agent's shifted position $\bar{\mathbf{x}}_i=[\bar{x}_i, \bar{y}_i]^{\mathsf{T}}$, whose time derivative yields a modified dynamic model of the form
\begin{equation}
    \left\{
    \begin{aligned}
    & \dot{\bar{\mathbf{x}}}_i=\mathbf{R}_i(\theta_i)(\mathbf{u}_i+\mathbf{d}_i)\\
    & \dot{\bar{\theta}}_i=\omega_i + d_{i2}
    \end{aligned}
    \right.
    \label{modified_dynamics}
\end{equation}
for a full-rank matrix defined as:
\begin{equation}
    \mathbf{R}_i(\theta_i)=
    \begin{bmatrix}
    \cos{\theta_i} & -\ell\sin{\theta_i}\\
    \sin{\theta_i} & \ell\cos{\theta_i}
    \end{bmatrix}
    \label{R}
\end{equation}

By defining a new control input $\bar{\mathbf{u}}_i=\mathbf{R}_i(\theta_i)\mathbf{u}_i$ and a new input disturbance $\bar{\mathbf{d}}_i=\mathbf{R}_i(\theta_i)\mathbf{d}_i$, the position dynamics in \eqref{modified_dynamics} can be reduced to a single integrator with input disturbance 
\begin{equation}
    \dot{\bar{\mathbf{x}}}_i=\bar{\mathbf{u}}_i+\bar{\mathbf{d}}_i
    \label{position dynamics}
\end{equation}

\subsection{Interaction Topology}
The interactions topology among the agents can be represented by a graph $\mathcal{G}=(\mathcal{G}, \mathcal{E}, a_{ij})$, where $\mathcal{V}=\{1,\dots,n\}$ denotes the set of nodes, $\mathcal{E}=\{(i,j)\in\mathcal{V}\times\mathcal{V}:a_{ij}\neq0\}$ denotes the set of edges, and $a_{ij}>0$ denotes the weight of interactions with $ij$ indicating the $i$-th and $j$-th agents.
The Laplacian matrix $\mathbf{L}=[l_{ij}]\in\mathbb{R}^{n\times n}$ of the graph is defined by
\begin{equation}
\label{laplacian}
l_{ij} = 
\begin{cases}
\sum_{i=1}^{n}a_{ij}, & \text{if}\ {i = j}\\
- {a_{ij}}, & \text{if}\ {i \ne j}
\end{cases}
\end{equation}
for $l_{ij}$ as the entry of $\mathbf{L}$ at the $i$-th row and $j$-th column\cite{mesbahi2010graph}.

In this work, we assume that the interaction topology of agents can be represented by a directed graph containing a rooted spanning tree, i.e., there is a node that can be connected to all the remaining nodes of the graph.
The Laplacian has a simple eigenvalue 0 with $\textbf{1}_n\in\mathbb{R}^{n\times1}$ as its corresponding right eigenvector, where $\textbf{1}_n$ denotes a vector with all elements equal to one.
All the other eigenvalues of the Laplacian have positive real parts\cite{panteley2020strict}.

Besides, we can define an augmented graph for a leader-follower multi-agent system on the basis of the graph of the followers.
We can describe the communication between the leader and the followers by a nonnegative matrix $\mathbf{B}=\text{diag}(b_i)$, where $b_i>0$ if and only if there is an edge between the $i$-th follower and the leader.
Assuming that the augmented graph contains a spanning tree, we have the following theorem on the construction of Lyapunov functions for stability analysis.

\begin{theorem}(Zhang \emph{et. al.}\cite{zhang2015constructing})
    Let
    \begin{equation}
        \left\{
        \begin{aligned}
            & \mathbf{q}=[q_i]^{\mathsf{T}}=(\mathbf{L}+\mathbf{B})^{-1}\mathbf{1}_n\\
            & \mathbf{p}=[p_i]^{\mathsf{T}}=(\mathbf{L}+\mathbf{B})^{-\mathsf{T}}\mathbf{1}_n\\
            & \mathbf{P}=\text{diag}(p_i/q_i)\\
            & \mathbf{Q} = \mathbf{P}(\mathbf{L}+\mathbf{B})+(\mathbf{L}+\mathbf{B})^{\mathsf{T}}\mathbf{P}
        \end{aligned}
        \right.
    \end{equation}
    Then both $\mathbf{P}$ and $\mathbf{Q}$ are positive definite.
    \label{theorem1}
\end{theorem}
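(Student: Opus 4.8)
The plan is to show that $\mathbf{M}:=\mathbf{L}+\mathbf{B}$ is a nonsingular $M$-matrix, use this to produce strictly positive vectors $\mathbf{q}$ and $\mathbf{p}$, and then reduce the positive definiteness of $\mathbf{Q}$ to the same $M$-matrix machinery applied to a symmetric Z-matrix. First I would record the sign structure of $\mathbf{M}$. By the Laplacian definition \eqref{laplacian} its off-diagonal entries are $-a_{ij}\le 0$, and adding the nonnegative diagonal $\mathbf{B}$ leaves these untouched while keeping the diagonal strictly positive; hence $\mathbf{M}$ is a Z-matrix. The spanning-tree hypothesis on the augmented graph is precisely the condition under which $\mathbf{M}$ is a \emph{nonsingular} $M$-matrix, so $\mathbf{M}^{-1}$ exists and is entrywise nonnegative. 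Since $\mathbf{M}^{-1}$ is nonsingular it has no zero row, so applying it to the strictly positive vector $\mathbf{1}_n$ gives $\mathbf{q}=\mathbf{M}^{-1}\mathbf{1}_n>\mathbf{0}$ entrywise; the same argument applied to $\mathbf{M}^{\mathsf{T}}$ yields $\mathbf{p}=\mathbf{M}^{-\mathsf{T}}\mathbf{1}_n>\mathbf{0}$. Consequently every diagonal entry $p_i/q_i$ of $\mathbf{P}$ is strictly positive, and the positive definiteness of $\mathbf{P}$ follows immediately.

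The heart of the argument is an identity for $\mathbf{Q}\mathbf{q}$. By construction $\mathbf{M}\mathbf{q}=\mathbf{1}_n$ and $\mathbf{M}^{\mathsf{T}}\mathbf{p}=\mathbf{1}_n$, and the crucial observation is that $\mathbf{P}\mathbf{q}=\mathbf{p}$, since the $i$-th entry of $\mathbf{P}\mathbf{q}$ is $(p_i/q_i)q_i=p_i$. Substituting these into $\mathbf{Q}\mathbf{q}=\mathbf{P}\mathbf{M}\mathbf{q}+\mathbf{M}^{\mathsf{T}}\mathbf{P}\mathbf{q}$ gives
\begin{equation}
\mathbf{Q}\mathbf{q}=\mathbf{P}\mathbf{1}_n+\mathbf{M}^{\mathsf{T}}\mathbf{p}=\mathbf{P}\mathbf{1}_n+\mathbf{1}_n>\mathbf{0},
\end{equation}
because the entries of $\mathbf{P}\mathbf{1}_n$ are the positive numbers $p_i/q_i$. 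Thus I obtain a strictly positive vector $\mathbf{q}$ satisfying $\mathbf{Q}\mathbf{q}>\mathbf{0}$.

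To conclude, I would note that $\mathbf{Q}=\mathbf{P}\mathbf{M}+\mathbf{M}^{\mathsf{T}}\mathbf{P}$ is symmetric (because $\mathbf{P}$ is diagonal), and that its off-diagonal entries $(p_i/q_i)m_{ij}+(p_j/q_j)m_{ji}$ are nonpositive, since $m_{ij},m_{ji}\le 0$ for $i\neq j$ while $p_i/q_i>0$. Hence $\mathbf{Q}$ is a symmetric Z-matrix admitting a strictly positive vector $\mathbf{q}$ with $\mathbf{Q}\mathbf{q}>\mathbf{0}$; by the characterization of nonsingular $M$-matrices as the semipositive Z-matrices (those mapping some positive vector to a positive vector), $\mathbf{Q}$ is a nonsingular symmetric $M$-matrix, and every symmetric $M$-matrix is positive definite. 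The main obstacle I anticipate is the positivity bookkeeping: rigorously justifying that $\mathbf{q}$ and $\mathbf{p}$ are \emph{strictly} positive rather than merely nonnegative from the spanning-tree assumption, and invoking cleanly the equivalence between semipositivity of a Z-matrix and the nonsingular $M$-matrix property, which is what ultimately converts the entrywise inequality $\mathbf{Q}\mathbf{q}>\mathbf{0}$ into definiteness of $\mathbf{Q}$.
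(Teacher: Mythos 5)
Your proof is correct, and since the paper states this theorem without proof (it is imported by citation from Zhang \emph{et al.}, \emph{Automatica} 2015), the right comparison is with that reference, whose argument yours essentially reproduces: $\mathbf{L}+\mathbf{B}$ is a nonsingular M-matrix under the rooted-spanning-tree assumption, so $\mathbf{q}>\mathbf{0}$ and $\mathbf{p}>\mathbf{0}$ entrywise (your no-zero-row observation handles strictness cleanly) and $\mathbf{P}$ is positive definite, while the identity $\mathbf{Q}\mathbf{q}=\mathbf{P}\mathbf{1}_n+\mathbf{1}_n>\mathbf{0}$ combined with the semipositivity characterization of nonsingular M-matrices shows the symmetric Z-matrix $\mathbf{Q}$ is a nonsingular M-matrix, hence positive definite. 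The only step taken on faith---that the augmented spanning-tree condition makes $\mathbf{L}+\mathbf{B}$ a nonsingular M-matrix---is a standard lemma that the cited source invokes in the same way, so no genuine gap remains.
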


\section{Controller Design}
\label{sec:Contrnoller}
\subsection{Controller Design}
By stacking the position dynamics \eqref{position dynamics} of agents, we can represent the position dynamics of the multi-agent system as
\begin{equation}
    \dot{\bar{\mathbf{x}}}=\bar{\mathbf{u}}+\bar{\mathbf{d}}
    \label{stacked position dynamics}
\end{equation}
where $\bar{\mathbf{x}}=[\bar{\mathbf{x}}_i^{\mathsf{T}}]^{\mathsf{T}}$, $\bar{\mathbf{u}}=[\bar{\mathbf{u}}_i^{\mathsf{T}}]^{\mathsf{T}}$, and $\bar{\mathbf{d}}=[\bar{\mathbf{d}}_i^{\mathsf{T}}]^{\mathsf{T}}\in\mathbb{R}^{2n\times1}$ are the extended position vector, input vector, and disturbance vector, respectively.

We uniformly assign length parameters to each agent to ensure the equal arc length separation of agents along the curve.
For that, we set a sequence of length parameters corresponding to the agents $\mathbf{s}=[s_1,\dots,s_i,\dots,s_n]^{\mathsf{T}}$, where $s_i$ is the length parameter corresponding to the $i$th agent, calculated by $s_i=(i-1)/n$. 
With that, we can calculate the matrix of basis functions corresponding to each agent as
\begin{equation}
    \bar{\mathbf{G}}=[\mathbf{G}_1(s_1)^{\mathsf{T}},\dots,\mathbf{G}_n(s_n)^{\mathsf{T}}]^{\mathsf{T}}
\end{equation}

Now, we can define the agent's position errors as
\begin{equation}
    \bar{\mathbf{x}}_e=\bar{\mathbf{x}}-\bar{\mathbf{G}}\boldsymbol{\xi}
    \label{pos err}
\end{equation}
and the curve coefficient errors as
\begin{equation}
    \boldsymbol{\xi}_e=\bar{\mathbf{G}}^{+}\bar{\mathbf{x}}-\boldsymbol{\xi}
    \label{coff err}
\end{equation}
where $\bar{\mathbf{G}}^{+}$ is the pseudoinverse of $\bar{\mathbf{G}}$.

To achieve the formation control and reject the input disturbance, we propose the following leader-follower controller
\begin{equation}
    \bar{\mathbf{u}}_i=
    \begin{cases}
        -k_1(\bar{\mathbf{x}}_i-\mathbf{G}_i\boldsymbol{\xi})-k_2\mathbf{R}_i\hat{\boldsymbol{\delta}}_i & \text{leader}\\
        -k_1\sum_{j\in\mathcal{N}_i}a_{ij}(\mathbf{G}_i-\mathbf{G}_j)\boldsymbol{\xi}_e-k_2\mathbf{R}_i\hat{\boldsymbol{\delta}}_i & \text{follower}
    \end{cases}
    \label{controller}
\end{equation}
where $k_1,k_2>0$ are the control gains, $\hat{\boldsymbol{\delta}}_i$ denotes the estimation of the input disturbance, and $\mathcal{N}_i=\{j\in\mathcal{V}:a_{ij}\neq0\}$ is the set of neighbors of agent $i$. The first part of the controller drives agents to achieve the formation; The second part of the controller eliminates the input disturbance. 
The disturbance estimation is updated by
\begin{equation}
    \dot{\hat{\boldsymbol{\delta}}}_i=k_2\mathbf{R}_i^{\mathsf{T}}(\bar{\mathbf{x}}_i-\mathbf{G}_i\boldsymbol{\xi})
    \label{estimation update}
\end{equation}

\subsection{Stability Analysis}
Without loss of generality, we assume that agent 1 is the root of the spanning tree contained in the interaction topology and the leader of the multi-agent system.
This indicates that the first row of the Laplacian $\mathbf{L}$ is a zero vector.
Then, we can stack the controller \eqref{controller} for each agent to obtain the control input $\bar{\mathbf{u}}$ for the whole multi-agent system.
\begin{equation}
    \bar{\mathbf{u}}=-k_1(\bar{\mathbf{L}}\bar{\mathbf{G}}\boldsymbol{\xi}_e+\bar{\boldsymbol{\Lambda}}\bar{\mathbf{x}}_e)-k_2\mathbf{R}\hat{\boldsymbol{\delta}}
    \label{control input}
\end{equation}
where $\bar{\mathbf{L}}=\mathbf{L}\otimes\mathbf{I}_2$ is the extended Laplacian matrix with $\otimes$ representing the Kronecker product; $\bar{\boldsymbol{\Lambda}}=\boldsymbol{\Lambda}\otimes\mathbf{I}_2\in\mathbb{R}^{2n\times2n}$ for $\boldsymbol{\Lambda}=\text{diag}(1, \mathbf{0}_{n-1})$; $\mathbf{R}=\text{diag}(\mathbf{R}_i)\in\mathbb{R}^{2n\times2n}$;  and $\hat{\boldsymbol{\delta}}=[\hat{\boldsymbol{\delta}}_i^{\mathsf{T}}]^{\mathsf{T}}\in\mathbb{R}^{2n\times1}$.

Replacing the controller \eqref{control input} into the stacked dynamics \eqref{whole dynamics}, we can obtain the complete dynamics of the multi-agent system
\begin{equation}
    \dot{\bar{\mathbf{x}}}=-k_1(\bar{\mathbf{L}}\bar{\mathbf{G}}\boldsymbol{\xi}_e+\bar{\boldsymbol{\Lambda}}\bar{\mathbf{x}}_e)-k_2\mathbf{R}\hat{\boldsymbol{\delta}}+\bar{\mathbf{d}}
    \label{whole dynamics}
\end{equation}

Now, we define a disturbance estimation error measurement
\begin{equation}
    \tilde{\boldsymbol{\delta}}=\hat{\boldsymbol{\delta}}-\frac{1}{k_2}\mathbf{d}
    \label{dist err}
\end{equation} 
for ${\bf d}=[{\bf d}_i^{\mathsf{T}}]^{\mathsf{T}} \in \mathbb{R}^{2n\times1}$.
Then, we can rewrite \eqref{whole dynamics} as
\begin{equation}
    \dot{\bar{\mathbf{x}}}=-k_1(\bar{\mathbf{L}}\bar{\mathbf{G}}\boldsymbol{\xi}_e+\bar{\boldsymbol{\Lambda}}\bar{\mathbf{x}}_e)-k_2\mathbf{R}\tilde{\boldsymbol{\delta}}
    \label{whole dynamics new}
\end{equation}
It can be derived $\dot{\tilde{\boldsymbol{\delta}}}=\dot{\hat{\boldsymbol{\delta}}}=k_2\mathbf{R}^{\mathsf{T}}\bar{\mathbf{x}}_e$ considering that $\mathbf{d}$ is constant.

The stability analysis of the proposed controller is based on the following assumption.
\begin{assumption}
The rank of matrix $\bar{\mathbf{G}}\in\mathbb{R}^{2n\times2H}$  always satisfies $\text{rank}(\bar{\mathbf{G}})=\min\{2n, 2H\}$. 
\label{assum1}
\end{assumption}

\begin{assumption}
    The number of agents and basis functions satisfies $n\leq H$.
    \label{assum2}
\end{assumption}

With Assumption \ref{assum1}, we can ensure the existence of the pseudoinverse $\bar{\mathbf{G}}^+$.
Specifically, we can compute $\bar{\mathbf{G}}^+$ by $\bar{\mathbf{G}}^+=(\bar{\mathbf{G}}^{\mathsf{T}}\bar{\mathbf{G}})^{-1}\bar{\mathbf{G}}^{\mathsf{T}}$ when $n\geq H$ and $\bar{\mathbf{G}}^+=\bar{\mathbf{G}}^{\mathsf{T}}(\bar{\mathbf{G}}\bar{\mathbf{G}}^{\mathsf{T}})^{-1}$ when $n<H$.
With Assumption \ref{assum2}, we can obtain $\bar{\mathbf{G}}\bar{\mathbf{G}}^+=\mathbf{I}_{2n}$.
Then, we can derive that $\bar{\mathbf{G}}\boldsymbol{\xi}_e=\bar{\mathbf{G}}\bar{\mathbf{G}}^+\bar{\mathbf{x}}_e=\bar{\mathbf{x}}_e$

\begin{remark}
    The condition proposed in Assumption \ref{assum2} is reasonable.
    We usually need more basis functions to accurately approximate a curve with a complex shape structure.
    As for ``simple" curves, such as straight lines and circles, we can also regard them as a combination of many basis functions with the coefficients of high-order basis functions as zeros.
    For example, we can represent a simple straight line as
    \begin{equation}
        \mathbf{c}=\mathbf{a}+\mathbf{b}s+\mathbf{0}(s^2+s^3+\dots)
    \end{equation}
    for $\mathbf{a}$ and $\mathbf{b}\in\mathbb{R}^{2\times1}$ as coefficients.
\end{remark}

\begin{proposition}
    Consider a multi-agent system of $n$ agents with dynamics \eqref{position dynamics} and a directed graph containing a rooted spanning tree.
    Given a \emph{static} planar curve with the form of \eqref{curve}, the controller \eqref{control input} drives the agents to be uniformly distributed on the curve and ensures the asymptotic stability of the position errors $\bar{\mathbf{x}}_e$.
    The orientations of agents will converge to some constant values.
\end{proposition}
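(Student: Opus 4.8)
The plan is to recast the closed loop as an autonomous error system in the stacked position error $\bar{\mathbf{x}}_e$ and the disturbance-estimation error $\tilde{\boldsymbol{\delta}}$, and then certify convergence with a single weighted quadratic Lyapunov function whose weight is furnished by Theorem \ref{theorem1}. First I would fix the error dynamics. Because the curve is \emph{static}, $\bar{\mathbf{G}}\boldsymbol{\xi}$ is constant and $\dot{\bar{\mathbf{x}}}_e=\dot{\bar{\mathbf{x}}}$; moreover Assumption \ref{assum2} gives $\bar{\mathbf{G}}\boldsymbol{\xi}_e=\bar{\mathbf{x}}_e$, which collapses the coupling term $\bar{\mathbf{L}}\bar{\mathbf{G}}\boldsymbol{\xi}_e$ in \eqref{whole dynamics new} to $\bar{\mathbf{L}}\bar{\mathbf{x}}_e$, yielding
\begin{equation}
\dot{\bar{\mathbf{x}}}_e = -k_1(\bar{\mathbf{L}}+\bar{\boldsymbol{\Lambda}})\bar{\mathbf{x}}_e - k_2\mathbf{R}\tilde{\boldsymbol{\delta}}, \qquad \dot{\tilde{\boldsymbol{\delta}}} = k_2\mathbf{R}^{\mathsf{T}}\bar{\mathbf{x}}_e.
\end{equation}
I would then recognize $\mathbf{L}+\boldsymbol{\Lambda}$ as the matrix $\mathbf{L}+\mathbf{B}$ of Theorem \ref{theorem1} with leader-connection matrix $\mathbf{B}=\boldsymbol{\Lambda}=\text{diag}(1,\mathbf{0}_{n-1})$, since agent $1$ (the root of the spanning tree and the leader) is the only agent tied to the virtual leader. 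Because the augmented graph has a spanning tree, $\mathbf{L}+\boldsymbol{\Lambda}$ is nonsingular and Theorem \ref{theorem1} supplies symmetric positive-definite $\mathbf{P}$ and $\mathbf{Q}$. Their Kronecker lifts $\bar{\mathbf{P}}=\mathbf{P}\otimes\mathbf{I}_2>0$ and $\bar{\mathbf{Q}}=\mathbf{Q}\otimes\mathbf{I}_2>0$ then satisfy $\bar{\mathbf{P}}(\bar{\mathbf{L}}+\bar{\boldsymbol{\Lambda}})+(\bar{\mathbf{L}}+\bar{\boldsymbol{\Lambda}})^{\mathsf{T}}\bar{\mathbf{P}}=\bar{\mathbf{Q}}$.

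The core step is the Lyapunov candidate $V=\tfrac12\bar{\mathbf{x}}_e^{\mathsf{T}}\bar{\mathbf{P}}\bar{\mathbf{x}}_e+\tfrac12\tilde{\boldsymbol{\delta}}^{\mathsf{T}}\bar{\mathbf{P}}\tilde{\boldsymbol{\delta}}$, which is positive definite because $\bar{\mathbf{P}}>0$. Differentiating along the error dynamics, the quadratic term in $\bar{\mathbf{x}}_e$ becomes $-\tfrac{k_1}{2}\bar{\mathbf{x}}_e^{\mathsf{T}}\bar{\mathbf{Q}}\bar{\mathbf{x}}_e$, while the two cross terms are $-k_2\bar{\mathbf{x}}_e^{\mathsf{T}}\bar{\mathbf{P}}\mathbf{R}\tilde{\boldsymbol{\delta}}$ and $+k_2\tilde{\boldsymbol{\delta}}^{\mathsf{T}}\bar{\mathbf{P}}\mathbf{R}^{\mathsf{T}}\bar{\mathbf{x}}_e$. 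The observation that makes them cancel is that $\bar{\mathbf{P}}$ is block-diagonal with scalar blocks $(p_i/q_i)\mathbf{I}_2$, so it commutes with the block-diagonal $\mathbf{R}$, giving $\bar{\mathbf{P}}\mathbf{R}=\mathbf{R}\bar{\mathbf{P}}$ and exact cancellation. This leaves $\dot{V}=-\tfrac{k_1}{2}\bar{\mathbf{x}}_e^{\mathsf{T}}\bar{\mathbf{Q}}\bar{\mathbf{x}}_e\le 0$. Note that weighting the disturbance error by $\bar{\mathbf{P}}$ (rather than by $\mathbf{I}$) is precisely what preserves this cancellation for a directed, non-symmetric $\mathbf{L}$.

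Finally I would close with LaSalle's invariance principle. Since $V$ is radially unbounded and $\dot V\le 0$, the pair $(\bar{\mathbf{x}}_e,\tilde{\boldsymbol{\delta}})$ is bounded and, with the orientations evolving on a compact torus, trajectories are precompact. On $\{\dot V=0\}$ we have $\bar{\mathbf{x}}_e=\mathbf{0}$; imposing invariance forces $\dot{\bar{\mathbf{x}}}_e=\mathbf{0}$, and since $\mathbf{R}$ is full rank this yields $\tilde{\boldsymbol{\delta}}=\mathbf{0}$, so the largest invariant set is $\{\bar{\mathbf{x}}_e=\mathbf{0},\ \tilde{\boldsymbol{\delta}}=\mathbf{0}\}$. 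Hence $\bar{\mathbf{x}}_e\to\mathbf{0}$, i.e.\ $\bar{\mathbf{x}}_i\to\mathbf{G}_i(s_i)\boldsymbol{\xi}=\mathbf{c}(s_i)$ with the equally spaced $s_i=(i-1)/n$, which is the claimed uniform distribution, while $\hat{\boldsymbol{\delta}}\to\frac{1}{k_2}\mathbf{d}$ by \eqref{dist err}, so the disturbance term in \eqref{estimation update} and \eqref{control input} is rejected. For the orientations, on the limit set $\bar{\mathbf{u}}_i\to-\bar{\mathbf{d}}_i$, hence $\mathbf{u}_i\to-\mathbf{d}_i$ and $\dot\theta_i=\omega_i+d_{i2}\to0$; since this set consists of equilibria with $\dot\theta_i=0$, the orientations settle to constants.

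I expect the main obstacle to be the cross-term cancellation: without exploiting the scalar-block structure of $\bar{\mathbf{P}}$ and the commutation $\bar{\mathbf{P}}\mathbf{R}=\mathbf{R}\bar{\mathbf{P}}$, the directed-graph weight would leave a residual indefinite term $k_2\bar{\mathbf{x}}_e^{\mathsf{T}}(\mathbf{I}-\bar{\mathbf{P}})\mathbf{R}\tilde{\boldsymbol{\delta}}$, so the naive unweighted $V=\tfrac12\|\bar{\mathbf{x}}_e\|^2+\tfrac12\|\tilde{\boldsymbol{\delta}}\|^2$ fails and the weighted one must be used on \emph{both} error blocks. A secondary subtlety is the orientation claim: because the invariant set is a continuum of equilibria, pinning each trajectory to a single constant orientation relies on LaSalle set-convergence together with the vanishing of $\dot\theta_i$ on that set (a cleaner statement would invoke an integrability/Barbalat argument on $\dot\theta_i$).
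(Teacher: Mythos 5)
Your proposal follows essentially the same route as the paper's own proof: the identical Lyapunov function $V=\bar{\mathbf{x}}_e^{\mathsf{T}}(\mathbf{P}\otimes\mathbf{I}_2)\bar{\mathbf{x}}_e+\tilde{\boldsymbol{\delta}}^{\mathsf{T}}(\mathbf{P}\otimes\mathbf{I}_2)\tilde{\boldsymbol{\delta}}$ obtained from Theorem \ref{theorem1} with $\mathbf{B}=\boldsymbol{\Lambda}$, the same commutation $(\mathbf{P}\otimes\mathbf{I}_2)\mathbf{R}=\mathbf{R}(\mathbf{P}\otimes\mathbf{I}_2)$ to cancel the cross terms, and the same full-rank argument on $\mathbf{R}$ to extract $\tilde{\boldsymbol{\delta}}\to\mathbf{0}$ and the orientation conclusion. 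If anything, your explicit LaSalle invocation (with precompactness via the torus) and your caveat that $\dot{\theta}_i\to 0$ alone does not pin $\theta_i$ to a constant are more careful than the paper, which infers asymptotic stability directly from the negative-semidefinite $\dot V$ and asserts constant orientations from the zero dynamics.
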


\begin{proof}
    It can be seen that the Laplacian $\mathbf{L}$ and the matrix $\boldsymbol{\Lambda}$ satisfies Theorem \ref{theorem1}.
    Therefore, replace $\mathbf{B}$ with $\boldsymbol{\Lambda}$ in Theorem \ref{theorem1}, then we can obtain positive definite matrice $\mathbf{P}$, $\mathbf{Q}\in\mathbb{R}^{n\times n}$.

    Consider the following Lyapunov function
    \begin{equation}
        V= \bar{\mathbf{x}}_e^{\mathsf{T}}(\mathbf{P}\otimes\mathbf{I}_2)\bar{\mathbf{x}}_e+\tilde{\boldsymbol{\delta}}^{\mathsf{T}}(\mathbf{P}\otimes\mathbf{I}_2)\tilde{\boldsymbol{\delta}}
        \label{lyap}
    \end{equation}
    Compute the time derivative of \eqref{lyap}, then we obtain
    \begin{equation}
        \begin{aligned}
            \dot{V}
            & = 2\bar{\mathbf{x}}_e^{\mathsf{T}}(\mathbf{P}\otimes\mathbf{I}_2)\dot{\bar{\mathbf{x}}}_e+2\tilde{\boldsymbol{\delta}}^{\mathsf{T}}(\mathbf{P}\otimes\mathbf{I}_2)\dot{\tilde{\boldsymbol{\delta}}}\\
            & = 2\bar{\mathbf{x}}_e^{\mathsf{T}}(\mathbf{P}\otimes\mathbf{I}_2)[-k_1(\bar{\mathbf{L}}\bar{\mathbf{G}}\boldsymbol{\xi}_e+\bar{\boldsymbol{\Lambda}}\bar{\mathbf{x}}_e)-k_2\mathbf{R}\tilde{\boldsymbol{\delta}}]\\
            & \quad+2\tilde{\boldsymbol{\delta}}^{\mathsf{T}}(\mathbf{P}\otimes\mathbf{I}_2)k_2\mathbf{R}^{\mathsf{T}}\bar{\mathbf{x}}_e\\
            & = -2k_1\bar{\mathbf{x}}_e^{\mathsf{T}}(\mathbf{P}\otimes\mathbf{I}_2)(\mathbf{L}\otimes\mathbf{I_2}+\boldsymbol{\Lambda}\otimes\mathbf{I_2})\bar{\mathbf{x}}_e\\
            & \quad-2k_2\bar{\mathbf{x}}_e^{\mathsf{T}}[(\mathbf{P}\otimes\mathbf{I}_2)\mathbf{R}-\mathbf{R}(\mathbf{P}\otimes\mathbf{I}_2)]\tilde{\boldsymbol{\delta}}
        \end{aligned}
    \end{equation}
    
    Checking the structure of $\mathbf{R}$ and $\mathbf{P}\otimes\mathbf{I}_2$, we can derive that
    \begin{equation}
        (\mathbf{P}\otimes\mathbf{I}_2)\mathbf{R}=\text{diag}\left(\frac{p_i}{q_i}\mathbf{I_2}\mathbf{R}_i\right)=\mathbf{R}(\mathbf{P}\otimes\mathbf{I}_2)
    \end{equation}
    Therefore, we have 
    \begin{equation}
        \dot{V}=-\bar{\mathbf{x}}_e^{\mathsf{T}}(\mathbf{Q}\otimes\mathbf{I}_2)\bar{\mathbf{x}}_e\leq0
    \end{equation}
    where $\dot{V}=0$ if and only if $\bar{\mathbf{x}}_e=\mathbf{0}$.
    Therefore, $\bar{\mathbf{x}}_e=\mathbf{0}$ is asymptotically stable.
    Furthermore, from $\dot{\bar{\mathbf{x}}}_e$ and knowing that $\bar{\mathbf{x}}_e=\mathbf{0}$, we have $\mathbf{R}\Tilde{\boldsymbol{\delta}}=\mathbf{0}$ and $\dot{\tilde{\boldsymbol{\delta}}}=\mathbf{0}$.
    Since $\mathbf{R}$ is a full-rank matrix, we can derive that $\tilde{\boldsymbol{\delta}}=\mathbf{0}$ is also asymptotically stable.

    Now, we consider the evolution of the orientations of the agents. Since we have proved the asymptotic stability of $\bar{\mathbf{x}}_e=\mathbf{0}$ and $\tilde{\boldsymbol{\delta}}=\mathbf{0}$, we can derive $\bar{\mathbf{u}}\rightarrow\mathbf{-Rd}$.
    Considering $\mathbf{u}=\mathbf{R}^{-1}\bar{\mathbf{u}}$ and $\mathbf{u}_i=[v_i,\omega_i]^{\mathsf{T}}$, we can obtain
    \begin{equation}
        \dot{\theta}_i=\omega_i+d_{i2}\rightarrow0
    \end{equation}
    which indicates the zero dynamics of agents' orientation.
    Therefore, we can derive that the orientations of agents will converge to some bounded constant solutions.
\end{proof}

\section{Results}

\begin{figure}[t]
    \centering
    \includegraphics[width=\columnwidth]{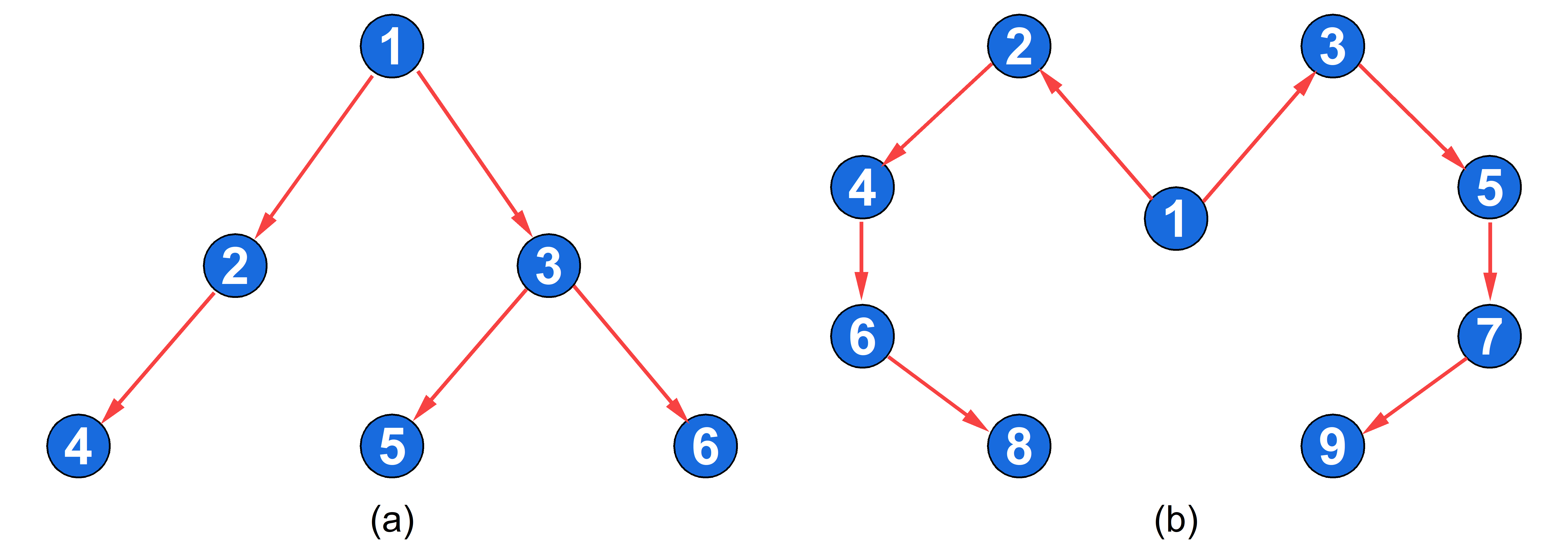}
    \caption{Interaction graph of agents.}
    \label{fig:graph}
\end{figure}

\subsection{Test with Single Curves}
We first test our method by driving agents to form a single static curve.
The simulations are conducted for a multi-agent system with six agents ($n=6$) interacting through a directed graph shown in Fig. \ref{fig:graph}(a).
We set the weights of all edges as one, the scalar for the change of coordinates as $\ell=0.01$, and the control gains as $k_1=k_2=1$. 
The constant input disturbances of the agents are set to $\mathbf{d}_i=[1,1]^{\mathsf{T}}$.
Agents depart from random initial positions with random initial orientations.

\subsubsection{Closed Curve} 
\label{closed}
The first simulation is conducted for a closed curve of the following form
\begin{equation}
    \begin{aligned}
        & c_x = (8+\sin{4\pi s})\cos{2\pi s}+4\\
        & c_y = (8+\cos{4\pi s})\sin{2\pi s}+4
    \end{aligned}
    \label{curve1}
\end{equation}
We approximate \eqref{curve1} to a truncated Fourier series with six harmonics (i.e., $H=13$, and Assumption \ref{assum2} is satisfied.) by \eqref{coff est}.
The matrix $\mathbf{G}_i(s_i)$ has a structure of $\mathbf{G}_i(s_i)=[\mathbf{g}_1,\dots,\mathbf{g}_6,\mathbf{I}_2]\in\mathbb{R}^{2\times26}$, where
\begin{equation}
    \mathbf{g}_h=
    \begin{bmatrix}
        \cos{2\pi hs_i} & \sin{2\pi hs_i} & 0 & 0\\
        0 & 0 & \cos{2\pi hs_i} & \sin{2\pi hs_i}
    \end{bmatrix}
\end{equation}
for $h=1,\dots, 6$.
It can be checked that $\bar{\mathbf{G}}=[\mathbf{G}_i(s_i)^{\mathsf{T}}]^{\mathsf{T}}$ satisfies Assumption \ref{assum1}.

\begin{figure}[t]
    \centering
    \includegraphics[width=\columnwidth]{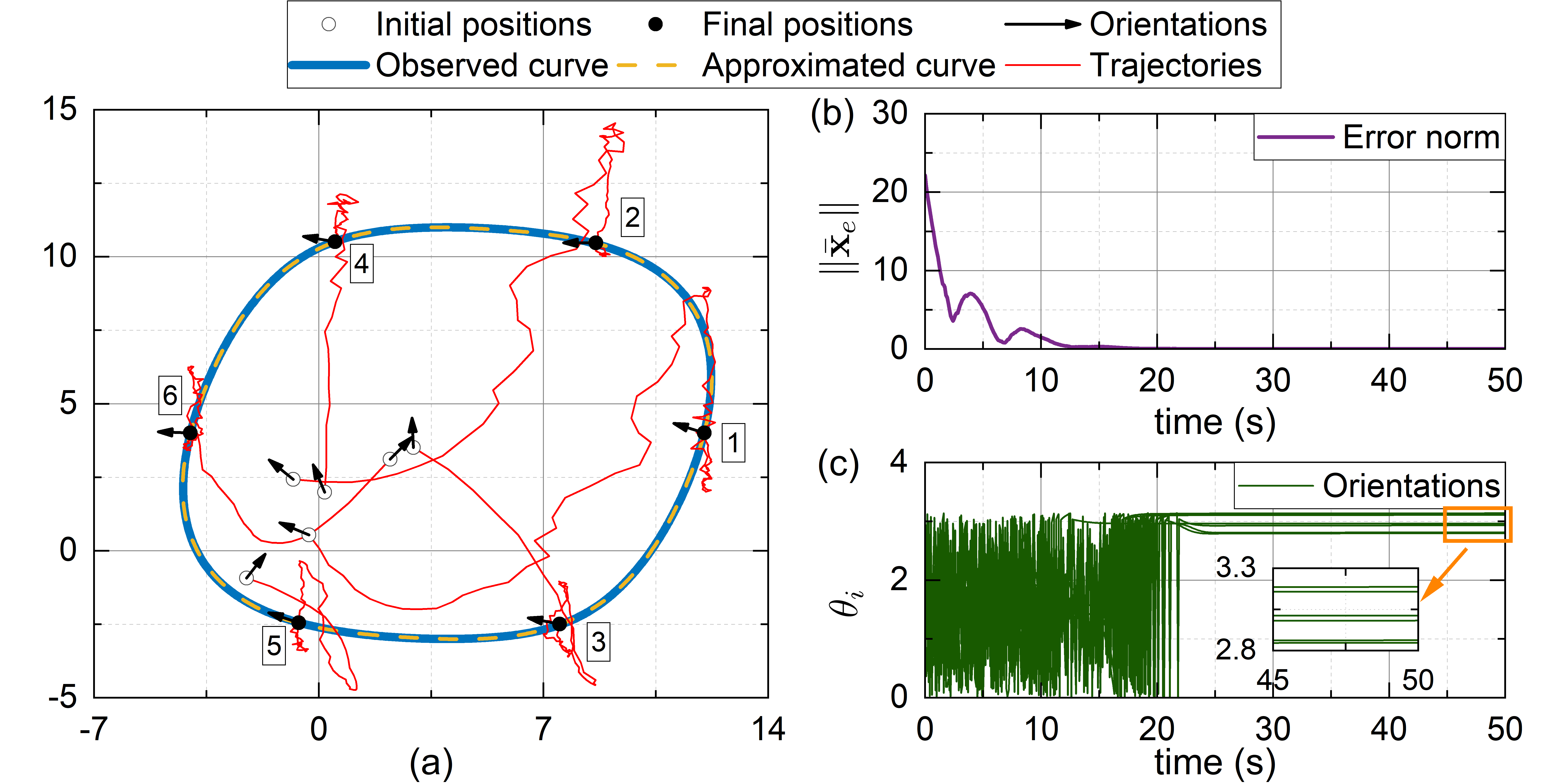}
    \caption{Results of forming a closed curve.}
    \label{fig:case1}
\end{figure}

Running the simulation, we obtain the results shown in Fig. \ref{fig:case1}(a).
We can see that agents depart from random positions and gradually converge to the desired curve.
We record the norm of position errors and the evolution of agents' orientation (rounded to $[0, 2\pi]$) in Fig. \ref{fig:case1}(b) and (c), respectively.
It can be seen that the errors asymptotically converge to zero and agents' orientations finally reach some constant values as we predicted in the stability analysis.

\subsubsection{Open Curve}
The second simulation is conducted for an open curve of the following form
\begin{equation}
    \mathbf{c}=(1-s)^3\mathbf{o}_1+3s(1-s)^2\mathbf{o}_2+3s^2(1-s)\mathbf{o}_3+s^3\mathbf{o}_4
    \label{curve2}
\end{equation}
for $\mathbf{o}_1=[3.5, 3]^{\mathsf{T}}$, $\mathbf{o}_2=[-0.5, -4]^{\mathsf{T}}$, $\mathbf{o}_3=[-2, 6]^{\mathsf{T}}$, and $\mathbf{o}_4=[-2, -1]^{\mathsf{T}}$.
We approximate \eqref{curve2} to a sixth-order polynomial (i.e., $H=7$, and Assumption \ref{assum2} is satisfied). The matrix $\mathbf{G}_i(s_i)$ has a structure of $\mathbf{G}_i(s_i)=\mathbf{g}_i\otimes\mathbf{I}_2\in\mathbb{R}^{2\times14}$ for $\mathbf{g}_i=[1,s_i,\dots, s_i^6]$.
It can be checked that $\bar{\mathbf{G}}=[\mathbf{G}_i(s_i)^{\mathsf{T}}]^{\mathsf{T}}$ satisfies Assumption \ref{assum1}.

\begin{figure}[t]
    \centering
    \includegraphics[width=\columnwidth]{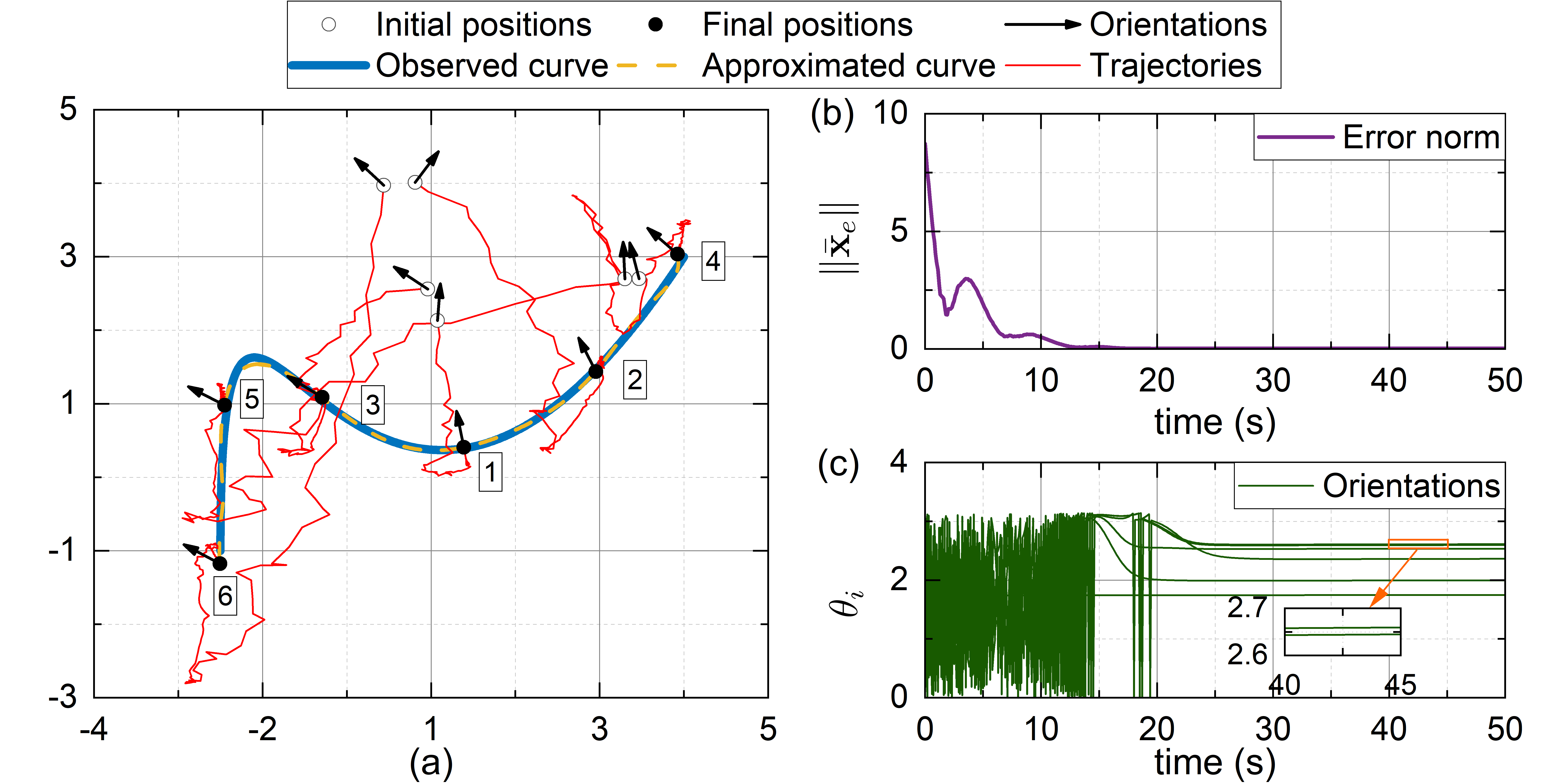}
    \caption{Results of forming an open curve.}
    \label{fig:case2}
\end{figure}

Running the simulation, we obtain the results shown in Fig. \ref{fig:case2}.
Similarly to the first simulation, agents depart from random initial positions and gradually converge to the desired curve (see Fig. \ref{fig:case2}(a)).
The position errors asymptotically converge to zero (see Fig. \ref{fig:case2}(b)) and agents finally reach some constant orientations (see Fig. \ref{fig:case2}(c)) as predicted in the stability analysis.

\subsection{Test with Curve Shape Changes}
\label{shape_change}
In real-world missions, the curve that agents need to form could vary between different patterns, and our method also has the potential to handle this case.
In this simulation, we simulate the situation where agents need to form one curve first and then move to another curve.
We adopt a multi-agent system of nine agents ($n=9$) interacting through a directed graph shown in Fig. \ref{fig:graph}(b). 
We set the control gain to $k_1=1,k_2=0.75$ and keep other settings the same as in the previous simulations.

The simulation is conducted for a closed curve of the following form
\begin{equation}
    \begin{aligned}
        & c_x = (8+2\sin{4\pi s})\cos{2\pi s}-12\\
        & c_y = (8+2\cos{4\pi s})\sin{2\pi s}+4 
    \end{aligned}
    \label{curve3}
\end{equation}
for the first 75 seconds and switched to another closed curve of the following form
\begin{equation}
    \begin{aligned}
        & c_x = (8+2\cos{2\pi s}+\sin{4\pi s})\cos{2\pi s}+24\\
        & c_y = (8+2\cos{2\pi s}+\sin{4\pi s})\sin{2\pi s}+4
    \end{aligned}
    \label{curve4}
\end{equation}
after 75 seconds.
We approximate both curves to a truncated Fourier series with eight harmonics (i.e. $H=17$). 
Similarly to Section \ref{closed}, we can check that both Assumptions \ref{assum1} and \ref{assum2} are satisfied.

\begin{figure}[t]
    \centering
    \includegraphics[width=\columnwidth]{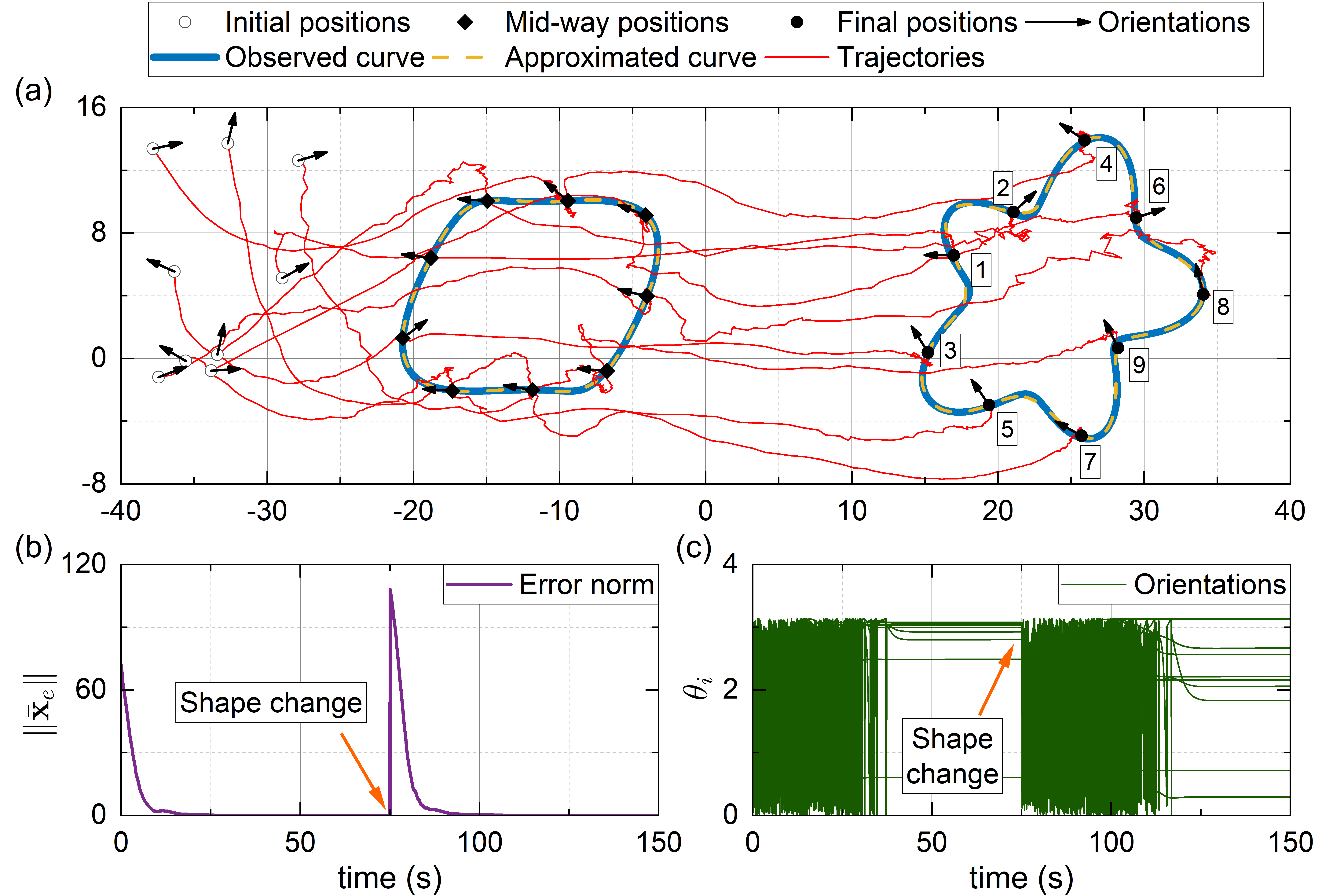}
    \caption{Results of the simulation with curve shape changes.}
    \label{fig:case3}
\end{figure}

Running the simulation, we can obtain the results shown in Fig. \ref{fig:case3}(a).
Agents converge to the first desired curve and then to the second desired curve.
This can also be seen in Fig. \ref{fig:case3}(b). 
The norm of the position errors asymptotically converges to zero when agents reach the first curve; then jumps up when the second desired curve appears and asymptotically converges to zero again when agents reach the second curve.
The orientations of agents (see Fig. \ref{fig:case3}(c)) reach some constant values as predicted in the stability analysis every time when agents reach a desired curve. 

\begin{remark}
    In real-world missions, the desired curves could be environmental boundaries or the perimeters of objects, etc., that can be observed by appropriate sensors, such as cameras, and extracted from videos or images.
    However, our core problem in this study is developing a control method for forming those curves.
    Therefore, we simply predefine some desired curves to pretend that we have already obtained the desired curves from observation for the convenience of simulations and call them the observed curves.
\end{remark}

\subsection{Experiments}
\begin{figure}[t]
    \centering
    \includegraphics[width=0.9\columnwidth]{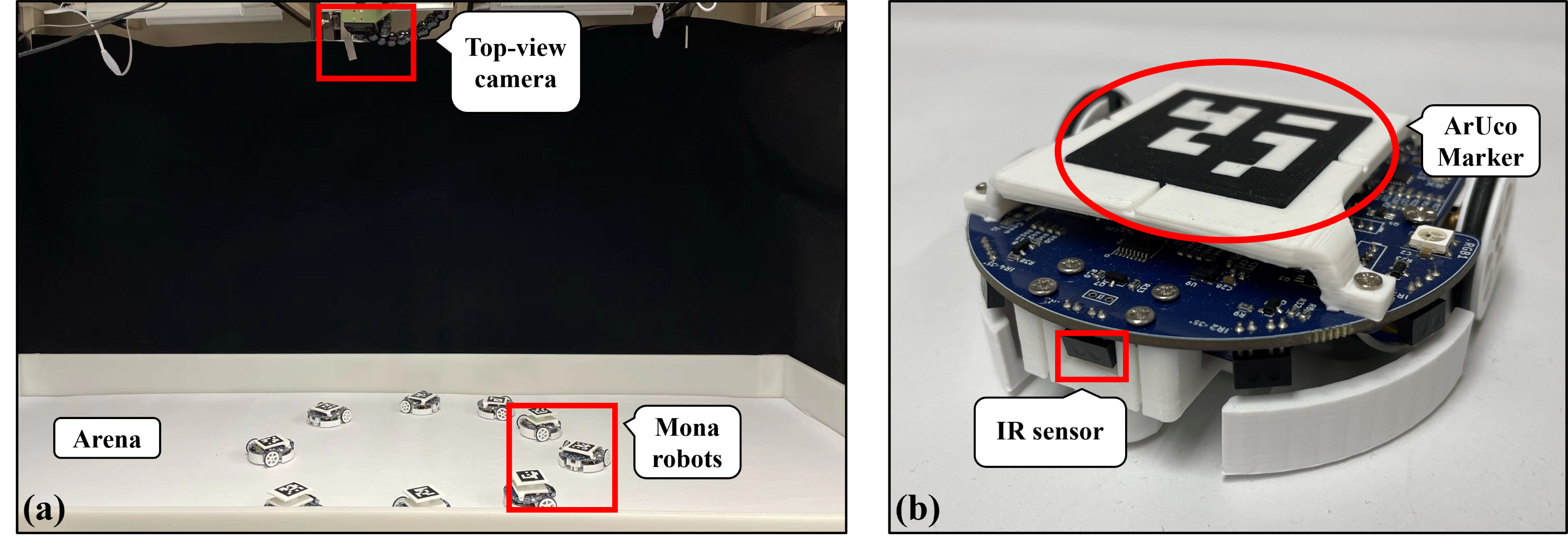}
    \caption{The configuration of the experimental platform.}
    \label{fig:platform}
\end{figure}
To verify the performance of our method in real-world tasks, we conduct a simple experimental test with the platform shown in Fig. \ref{fig:platform}.
The platform consists of nine ($n=9$) Mona robots \cite{arvin2019mona} with nonholonomic dynamics, a top-view camera to observe the pose of the robots, a control PC to collect data and send motion commands, and a 1.8m$\times$0.8m arena for robots to move around.
We build a Wi-Fi communication at a rate of 10 Hz between the control PC and robots based on ROS.

During the experiment, robots interact with each other through the topology shown in Fig. \ref{fig:graph}(b).
We empirically set the gains of the controller to $k_1=\frac{2}{3}$ and $k_2=\frac{1}{3}$.
We apply a constant input disturbance $\mathbf{d}_i=[0.5, 0.5]^{\mathsf{T}}$ cm/s to each robot through the control PC.
Considering the hardware constraints of the robots, we limit the linear velocity applied to the robot wheels to $\pm2.25$ cm/s.

The experiment is conducted for a closed curve of the following form
\begin{equation}
    \begin{aligned}
        & c_x = 0.225(1-\sin{2\pi s})\cos{2\pi s}+0.75\\
        & c_y = 0.225(1-\sin{2\pi s})\sin{2\pi s}+0.675
    \end{aligned}
\end{equation}
for the first 60 seconds and switched to another closed curve of the following form
\begin{equation}
    \begin{aligned}
        & c_x = (0.03\sin{4\pi s}+0.06\cos{10\pi s}+0.225)\cos{2\pi s}+1.125\\
        & c_y = (0.03\sin{4\pi s}+0.06\cos{10\pi s}+0.225)\sin{2\pi s}+0.375
    \end{aligned}
\end{equation}
after 60 seconds. The approximations of both curves follow the same operations as in Section \ref{shape_change}.

The results of the experiment are shown in Fig. \ref{fig:experi}(a), where several screenshots show that the robots start from random initial positions and finally form the desired curves.
We record the position error norm and orientations in Fig. \ref{fig:experi}(b) and (c), respectively.
The convergence of error norm and orientations can be seen every time the robots reach a desired curve.
The entire experiment process can be seen in the supplementary video at \url{https://vimeo.com/872927697}.

\begin{figure}[t]
    \centering
    \includegraphics[width=\columnwidth]{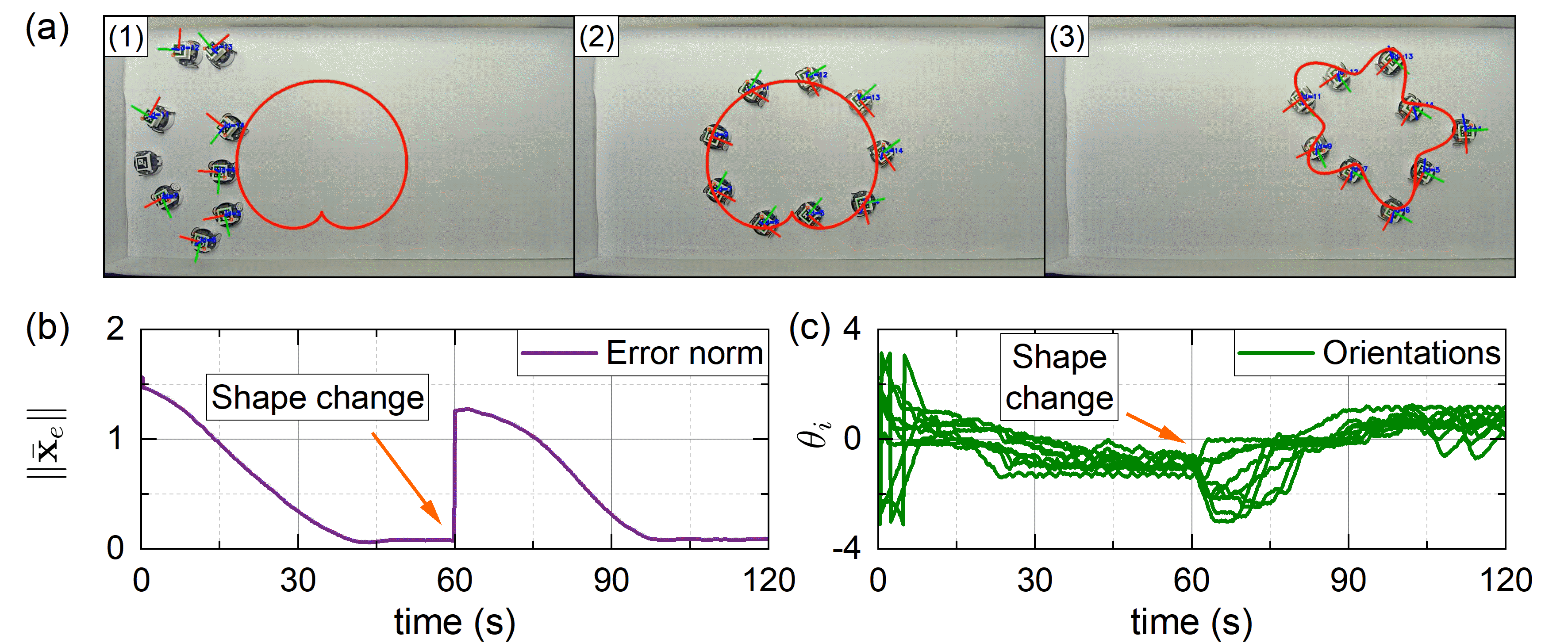}
    \caption{Results of the experiment with curve shape changes.}
    \label{fig:experi}
\end{figure}

\section{Conclusion}
In this work, we first present a unified parametric representation for general curves, that is, open and closed curves.
Then, we derive a leader-follower formation controller based on the parametric equations to form the desired curves under directed communications and constant input disturbances.
We propose a Lyapunov-based stability analysis to prove the asymptotic stability of the proposed controller.
Numerical simulations and experimental studies show the desirable performance of our proposed method in dealing with different cases, such as open curves, closed curves, and curve shape changes.
The method is with high scalability and has the potential to be applied to various real-world missions.
However, there are still some limits to the proposed method, such as dealing with continuously evolving curves and working with time delays.
In the future, we will keep extending the proposed method to solve the aforementioned problems.
 

\end{document}